\documentclass[letterpaper, 10 pt, conference]{ieeeconf}  

\IEEEoverridecommandlockouts                              

\usepackage{graphicx,algorithm,algorithmic,amsthm,amsmath,amsfonts,verbatim,mathtools,enumerate,bm,hyperref,cite}
\usepackage{graphicx}
\usepackage{subcaption}
\usepackage{float}
\usepackage{pgfplots}
\usetikzlibrary{matrix}
\usepackage{pgfplots}
\usepgfplotslibrary{fillbetween}
\definecolor{case_1_color}{rgb}{0.47,0.37,0.94}
\definecolor{case_2_color}{rgb}{1,0.38,0}
\definecolor{case_3_color}{rgb}{1,0.69,0}

\allowdisplaybreaks

\DeclareMathOperator{\vech}{vech}
\DeclareMathOperator{\vect}{vec}

\newcommand{\norm}[1]{\left\lVert#1\right\rVert}
\newcommand{\mnorm}[1]{{\left\vert\kern-0.25ex\left\vert\kern-0.25ex\left\vert #1 
    \right\vert\kern-0.25ex\right\vert\kern-0.25ex\right\vert}}

\newtheorem{proposition}{Proposition}

\title{\LARGE \bf Continuous-Time Estimation-Aware Optimal Control of Mobile Sensors for Target Tracking
}

\author{Michael States, Sharveshwaran Umashankar, Minsen Yuan, and Yue Yu%
\thanks{M. States, S. Umashankar, M. Yuan, and Y. Yu are with the Department of Aerospace Engineering and Mechanics, University of Minnesota, Minneapolis, MN 55455, USA ({\tt\small \{state067, umash002, yuan0450, yuey\}@umn.edu}). The authors would like to thank Demoz Gebre-Egziabher for helpful discussions on this work.}%
}

\begin{document}

\maketitle
\thispagestyle{empty}
\pagestyle{empty}

\begin{abstract}
Target tracking is the problem of estimating the state of a target system using measurements collected (oftentimes) by mobile sensors. Trajectory optimization for mobile sensors must account not only for dynamical and operational constraints, but also for uncertainty in estimating the target system's state. We propose an estimation-aware optimal control method for target tracking that accounts for the nonlinear dynamics of both the mobile sensor and target system, the nonlinear covariance dynamics in extended Kalman filtering, and nonlinear measurement models. Furthermore, it ensures constraints on the mobile sensor are satisfied in continuous time (within numerical precision). In numerical simulations, the proposed method achieves estimation-aware target tracking while satisfying the mobile-sensor constraints in continuous time, even under sparse time discretization. 
\end{abstract}

\section{Introduction}

Target tracking is the problem of estimating the state of a dynamical target system using measurements. These measurements are often collected by \emph{mobile sensors}, which are mobile platforms equipped with onboard sensing units (such as aerial vehicles carrying range or bearing sensors). The measurements often depend jointly on the states of the mobile sensor and the target system. As a result, planning the mobile sensor's trajectory must account for not only the sensor's dynamics and operational constraints, but also how measurements collected along the trajectory affect estimation of the target system's state. Target tracking has a wide range of applications including search and rescue, environmental monitoring, surveillance, and site inspection \cite{Hildmann2019review, Senanayake2016search, Furukawa2006recursive, Xing2023autonomous}. 

Existing work on target tracking uses various metrics to quantify how the mobile sensor's trajectory affects target state estimation. One class of metrics involves the observability Gramian, which characterizes how well measurements collected along the mobile sensor's trajectory can distinguish different initial states of the target \cite{Powel2015empirical, Preiss2018simultaneous, Hausman2017observability, Quenzer2014observability}. Variants of the observability Gramian instead focus on the distinguishability of the target's final state \cite{Salaris2019online}. Another direction uses the Fisher information matrix to quantify the amount of information that measurements provide about the target state \cite{Bostrom2018global, Lecadre1999optimizing, Martinez2006optimal, Oshman1999optimization,Li2025recursive}.

A widely used metric is the state-estimation error covariance, which directly quantifies the uncertainty of the state estimate \cite{Prentice2009belief, Zheng2022belief, Liu2017model, Benhamou2024backpropagation, Bostrom2019informative, Zhou2009multi, Cheng2022optimal, Ha2019periodic, Pinto2022multiagent, Choi2010continuous, Haugen2016monitoring, Braaten2026localizing}. Covariance-based metrics are particularly attractive because widely used estimators, like the Kalman filter, provide a covariance that accounts for process noise and measurement noise. Some methods use these metrics to formulate an estimation-aware optimal control problem by jointly considering the dynamics of the mobile sensor and the covariance of the target system. These often feature discrete-time \cite{Liu2017model, Benhamou2024backpropagation, Bostrom2019informative, Zhou2009multi, Braaten2026localizing, Prentice2009belief, Quenzer2014observability} or continuous-time Kalman filters \cite{Cheng2022optimal, Ha2019periodic, Pinto2022multiagent, Choi2010continuous, Haugen2016monitoring}. 

One limitation of existing target-tracking methods is that they only constrain the mobile sensor state in discrete time. Constraints, such as keep-out zones, are often safety critical and must be satisfied at all times, but since informative sensing often requires aggressive maneuvers that drive the state close to constraint boundaries, constraint violations can occur between discretization points. Existing methods impose such constraints either through explicitly discrete-time formulations \cite{Liu2017model, Bostrom2019informative, Zhou2009multi, Benhamou2024backpropagation, Bostrom2018global, Zheng2022belief} or through direct transcription and collocation of continuous-time models \cite{Haugen2016monitoring, Choi2010continuous, Ha2019periodic, Salaris2019online}. Reducing violations between discretization or collocation points generally requires dense time discretization, which increases the size and computational cost of the resulting trajectory optimization problem. 

Addressing this limitation by enforcing continuous-time constraints introduces an additional computational challenge for target tracking. Recent continuous-time optimal control methods ensure constraint satisfaction between discretization points by differentiating the underlying continuous-time dynamics equation \cite{elango2025continuous,elango2026continuous}. In target tracking, this dynamics equation not only introduces nonlinear coupling between the mobile sensor's state and target covariance, but also must preserve the covariance matrix's symmetric structure. Some recent results have developed efficient backpropagation-based methods for differentiating covariance dynamics for discrete-time extended Kalman filtering \cite{Benhamou2024backpropagation}. However, to the best of our knowledge, efficient differentiation formulas of the continuous-time extended Kalman filter covariance and using them to ensure continuous-time constraints in target tracking has received little if any attention. 

We propose an estimation-aware optimal control method that jointly accounts for the nonlinear measurements and mobile sensor dynamics, as well as target system covariance dynamics in extended Kalman filtering. It ensures state and input constraints on the mobile sensor trajectory at all times (within numerical precision). To this end, we first transform constraints on the mobile sensor into auxiliary continuous-time nonlinear dynamics with boundary conditions. Next, we provide an efficient formula that exploits the symmetry of the covariance matrix to differentiate the continuous-time nonlinear covariance dynamics in extended Kalman filtering and support efficient gradient-based optimization. Finally, we demonstrate the proposed method with the augmented Lagrangian algorithm with two examples, LTE transmitter localization, and ground vehicle tracking. Compared with off-the-shelf solvers that only enforce constraints at discretization points, the proposed method ensures continuous-time constraint satisfaction and supports sparse time discretization. 

\section{Estimation-Aware Target Tracking}

We first introduce the mobile sensor, target dynamics, measurement model, and estimation algorithm. We then formulate the estimation-aware optimal control problem. 

\subsection{Mobile Sensor Dynamics}

Consider a mobile sensor whose nonlinear deterministic dynamics are governed by
\begin{equation*}
    \dot{x}(t) = f(x(t), u(t)),
\end{equation*} 
where \(x(t)\in\mathbb{R}^{n_x}\) is the mobile sensor state and \(u(t)\in\mathbb{R}^{n_u}\) is the control input. Let $u(t) \in \mathcal{U}$ define the input set and \(g(x(t)) \leq 0_{n_g}\) define inequality constraints on the mobile sensor's state.

\subsection{Target Dynamics}

Consider a target system whose nonlinear stochastic dynamics are governed by
\begin{equation*}
    \dot{\xi}(t) = \phi(\xi(t)) + w(t),
\end{equation*} 
where \(\xi(t) \in \mathbb{R}^{n_\xi}\) is the target state at time \(t\), and $w(t) \in \mathbb{R}^{n_\xi}$ is the zero-mean Gaussian process noise with covariance \(Q\in\mathbb{S}_{\succeq 0}^{n_\xi}\), where \(\mathbb{S}_{\succeq 0}\) denotes the set of positive semi-definite symmetric matrices. 


\subsection{Mobile Sensor Measurement Model}

The mobile sensor collects measurements defined by the nonlinear stochastic function 
\begin{equation*}
    y(t) = h(x(t), \xi(t)) + v(t), 
\end{equation*}
where $y(t) \in \mathbb{R}^{n_y}$ is the measurement at time \(t\) that depends on the mobile sensor state \(x(t)\) and target state \(\xi(t)\), and $v(t) \in \mathbb{R}^{n_y}$ is the zero-mean Gaussian measurement noise with covariance \(R\in\mathbb{S}_{\succ 0}^{n_y}\), where \(\mathbb{S}_{\succ 0}\) denotes the set of positive definite symmetric matrices.


\subsection{Belief Dynamics in Extended Kalman Filtering}

The continuous-time extended Kalman filter propagates the mean and covariance of the target state estimate according to the differential equations \cite{Simon2006optimal}
\begin{equation}\label{eqn: belief dynamics}
    \begin{aligned}
        \dot{\mu}(t) & = \phi(\mu(t))+ K(x(t))\big(y(t)-h(x(t), \mu(t))\big),\\
       \dot{P}(t) & = A(t)P(t)+P(t)A(t)^\top +Q\\
        &\quad-P(t)C(x(t))^\top R^{-1}C(x(t))P(t),
    \end{aligned}
\end{equation}
where \(\mu(t)\in\mathbb{R}^{n_\xi}\) is the mean of the estimated target state \(\xi(t)\), \(P(t)\in\mathbb{R}^{n_\xi \times n_\xi}\) is the covariance, \(K(x(t)) = P(t)C(x(t))^\top R^{-1}\) is the Kalman gain, and 
\begin{equation}\label{eqn: Jacobians}
\begin{aligned}
    A(t) & = \left.\frac{\partial}{\partial \xi(t)} \phi(\xi(t))\right|_{\xi(t)=\mu(t)},\\
    C(x(t)) & = \left.\frac{\partial}{\partial \xi(t)} h(x(t), \xi(t))\right|_{\xi(t)=\mu(t)}.
\end{aligned}
\end{equation}
Notice that the Kalman gain depends on the mobile sensor state \(x(t)\). Also note that the true stochastic measurement $y(t)$ is unavailable during path planning, so the best prediction of the measurement \(\hat{y}(t) = h(x(t),\mu(t))\) is used instead. 

\subsection{Estimation-Aware Optimal Control Problem}

When planning the mobile sensor's trajectory, future measurements are unavailable. Therefore we adopt the \emph{zero-innovation} assumption, under which the predicted measurement \(y(t)\) equals its expected value \(h(x(t), \mu(t))\). Under this assumption, the mean dynamics in \eqref{eqn: belief dynamics} simplify to
\begin{equation}\label{eqn: mean dynamics}
    \mu(t)=\mu_0+\int_0^t \phi(\mu(\tau)) d\tau,
\end{equation}
where \(\mu_0\) is the initial mean of the target state estimate. To further simplify the covariance dynamics in \eqref{eqn: belief dynamics}, we consider the following half-vectorization transformation that extracts the lower triangular elements of $P(t)$
\begin{equation}
    p(t)=\vech(P(t)), \enskip P(t)=\vech^{-1}(p(t)),
\end{equation}
where $\vech$ stacks the lower triangular entries of a symmetric matrix column-wise (including the diagonal), $\vech^{-1}$ reconstructs the symmetric matrix from these entries, and $p(t) \in \mathbb{R}^{n_p}$ with $n_p = n_\xi(n_\xi + 1)/2$. 
Applying $\vech$ to \eqref{eqn: belief dynamics} yields
\begin{equation}\label{eqn: covariance vech dynamics}
\begin{aligned}
    r(x(t), p(t))&=\vech(A(t)P(t)+P(t)A(t)^\top +Q)\\
        &\quad-\vech(P(t)C(x(t))^\top R^{-1}C(x(t))P(t)),
\end{aligned}  
\end{equation}
where \(A(t)\) and \(C(x(t))\) are computed via \eqref{eqn: Jacobians} and \eqref{eqn: mean dynamics}.

We now use $p(t)$ to formulate the estimation-aware optimal control problem
\begin{equation} \label{eqn: CT optimization problem}
    \begin{array}{ll}
      \underset{x, p, u}{\mbox{min.}} & \int_0^{t_f} \ell(x(t), p(t), u(t)) dt, \\ 
       \mbox{s. t.}  & \dot{x}(t) = f(x(t), u(t)),\\
     & \dot{p}(t)  = r(x(t), p(t)),\\
     & x(0)=x_0,\enskip p(0) = \vech(P_0), \\
     & x(t_f) = x_f, \\
     & g(x(t))\leq 0_{n_g},\enskip u(t)\in\mathcal{U},
    \end{array}
\end{equation}
where \(\ell(x(t),p(t),u(t)) \colon \mathbb{R}^{n_x} \times \mathbb{R}^{n_p} \times \mathbb{R}^{n_u} \to \mathbb{R}\) is the running cost function, $t_f \in \mathbb{R}$ is the fixed final time of the trajectory, $x_0 \in \mathbb{R}^{n_x}$ is the initial mobile sensor state, $x_f \in \mathbb{R}^{n_x}$ is the final mobile sensor state, and $P_0 \in \mathbb{R}^{n_\xi \times n_\xi}$ is the initial target covariance. 
In implementation, the path planner is initialized with the latest target state estimate \(\mu_0\) in \eqref{eqn: mean dynamics}, and target covariance estimate \(P_0\) provided by the state estimator. 
\section{Discretization with Continuous-Time Constraints}

While we could implement \eqref{eqn: CT optimization problem} directly into existing solvers, we propose augmenting the optimal control problem with an accumulated constraint violation term to maintain continuous-time constraint satisfaction. We then discretize and linearize the resulting augmented problem for use in existing optimization algorithms.



\subsection{Discretization of Optimal Control Problem}

To transform the finite-dimensional optimization problem in \eqref{eqn: CT optimization problem} into a finite-dimensional one, we introduce the following augmented state that simultaneously account for mobile sensor dynamics, continuous-time state constraints, and the integral objective function in \eqref{eqn: CT optimization problem} 
\begin{equation*}   
    \zeta(t)=\begin{bmatrix}
    x(t)^\top & p(t)^\top & \alpha(t) & \beta(t)
    \end{bmatrix}^\top,
\end{equation*}
where $\alpha(t) \in \mathbb{R}$ denotes the accumulated constraint violation at time \(t\), and $\beta(t) \in \mathbb{R}$ denotes the integral of the running cost \(\ell\). The augmented state dynamics are governed by
\begin{equation} \label{eqn: CT dynamics}
    \dot{\zeta}(t) = \psi(\zeta(t),u(t)) \coloneqq \begin{bmatrix}
        f(x(t),u(t)) \\
        r(x(t),p(t)) \\
        \frac{1}{2} \norm{\operatorname{max}\{g(x(t)),0_{n_g}\}}_2^2 \\
        \ell(x(t),p(t),u(t))
    \end{bmatrix},
\end{equation}
where $\norm{\cdot}_2$ denotes the $\ell_2$ norm (for vectors).
By imposing the constraint \(\alpha(t_f) = 0\), we ensure that the accumulated constraint violation is zero. 

We introduce the following uniform time discretization
\begin{equation*}
    t_k=(k-1)\frac{t_f}{N-1}, \enskip k = 1, 2, \ldots, N,
\end{equation*}
where $t_k$ is the $k$-th time step of $N$ total steps. We consider zeroth-order-hold inputs such that, for \(k=1, 2, \ldots, N-1\),
\begin{equation*}
    u(t)=u_k, \enskip t \in [t_k, t_{k+1}],
\end{equation*}
for some \(u_k\in\mathbb{R}^{n_u}\).
Furthermore, let \(\zeta_k = \zeta(t_k)\), \(\xi_k = \xi(t_k)\), and \(y_k = y(t_k)\). By considering \(\{\zeta_k\}_{k=1}^N, \{u_k\}_{k=1}^{N-1}\) as optimization variables, we can transform the continuous-time optimal control problem in \eqref{eqn: CT optimization problem} into the following finite-dimensional optimization problem
\begin{equation} \label{eqn: optimization problem}
    \begin{array}{ll}
      \underset{\{\zeta_k\}_{k=1}^N, \{u_k\}_{k=1}^{N-1}}{\mbox{min.}} & \begin{bmatrix}
          0^\top_{n_\zeta - 1} & 1
      \end{bmatrix} \zeta_N, \\
       \; \; \mbox{s. t.}  & \zeta_{k+1} = \Psi_k(\zeta_k,u_k), \enskip k = 1,2, \ldots, N - 1, \\
     & \zeta_1 = \begin{bmatrix}
         x_0 \\ \vech(P_0) \\ 0_2
     \end{bmatrix}, \enskip M \,
     \zeta_N = \begin{bmatrix}
         x_f \\ 0
     \end{bmatrix},\\
     &  u_k\in\mathcal{U},\enskip k=1,2, \ldots, N-1,
    \end{array}
\end{equation}
where $M = \begin{bmatrix}
          I_{n_x} & 0_{n_x\times n_p} & 0_{n_x\times 1} & 0_{n_x\times 1} \\
          0_{1\times n_x} & 0_{1\times n_p} & 1 & 0
      \end{bmatrix}$ extracts $x_k$ and $\alpha_k$ from $\zeta_k$, \(n_\zeta = n_x + n_p + 2\) is the dimension of the augmented state, 
       and 
      $\Psi_k(\zeta_k,u_k) = \zeta(t_{k+1})$ is the solution given by the following initial value problem
\begin{equation} \label{eqn: initial value problem}
\begin{aligned}
    \zeta(t) &= \zeta(t_k) + \int_{t_k}^{t}\psi(\zeta(\sigma), u_k)d\sigma, \enskip t \in [t_k, t_{k+1}], \\
    \zeta(t_k) &= \zeta_k.
\end{aligned}    
\end{equation}

\subsection{Linearization of Covariance Dynamics}

The difficulty in solving the optimization problem~\eqref{eqn: optimization problem} comes from the nonlinear functions \(\{\Psi_k\}_{k=1}^{N-1}\), each of which is defined implicitly by the solution of an initial value problem similar to \eqref{eqn: initial value problem}. 
To enable gradient-based optimization algorithms (such as augmented Lagrangian methods), we need to evaluate the Jacobians of \(\{\Psi_k\}_{k=1}^{N-1}\) efficiently.
To this end, we let
\begin{equation*}
    \begin{aligned}
        \Phi^\zeta(t) &= \frac{\partial \zeta(t)}{\partial \zeta_k}=I_{n_\zeta}+\int_{t_k}^{t}\frac{\partial \psi(\zeta(\sigma),u_k)}{\partial \zeta(\sigma)} \frac{\partial \zeta(\sigma)}{\partial \zeta_k}d\sigma,
        \\
        \Phi^u(t) &= \frac{\partial \zeta(t)}{\partial u_k} 
        \\
        &= \int_{t_k}^{t} \left( \frac{\partial \psi(\zeta(\sigma),u_k)}{\partial \zeta(\sigma)} \frac{\partial \zeta(\sigma)}{\partial u_k} + \frac{\partial \psi(\zeta(\sigma),u_k)}{\partial u_k} \right) d\sigma.
    \end{aligned}
\end{equation*}
where \(\zeta(t)\) satisfies \eqref{eqn: initial value problem}. 
Based on this definition, we can now express the Jacobians of \(\Psi_k\) with respect to \(\zeta_k\) and \(u_k\) as
\begin{equation*}
    \frac{\partial \Psi_k(\zeta_k,u_k)}{\partial \zeta_k}
    = \Phi^\zeta(t_{k+1}), \qquad
    \frac{\partial \Psi_k(\zeta_k,u_k)}{\partial u_k}
    = \Phi^u(t_{k+1}),
\end{equation*}
where \(\Phi^\zeta(t)\) and \(\Phi^u(t)\), for \(t \in [t_k,t_{k+1}]\), are the solutions of the following ordinary differential equations:
\begin{equation} \label{eqn: jacobian initial value problem}
\begin{aligned}
    \dot{\Phi}^\zeta(t) &= 
    \frac{\partial \psi(\zeta(t),u_k)}{\partial \zeta(t)} \Phi^\zeta(t),
    \\
    \dot{\Phi}^u(t) &= 
    \frac{\partial \psi(\zeta(t),u_k)}{\partial \zeta(t)} \Phi^u(t) + \frac{\partial \psi(\zeta(t),u_k)}{\partial u_k},\\
    \Phi^\zeta(t_k) & = I_{n_\zeta},\enskip \Phi^u(t_k) = 0_{n_\zeta \times n_u}.
\end{aligned}
\end{equation}
We compute \(\zeta(t)\), \(\Phi^\zeta(t)\), and \(\Phi^u(t)\) jointly by solving \eqref{eqn: initial value problem} together with \eqref{eqn: jacobian initial value problem}~\cite{elango2025continuous}.



To solve the initial value problem defined in \eqref{eqn: jacobian initial value problem}, we need to differentiate the continuous-time dynamics function \(\psi(\zeta(t), u(t))\) defined in \eqref{eqn: CT dynamics} with respect to both state \(\zeta(t)\) and input \(u(t)\). The challenge in this differentiation comes from the function \(r(x(t), p(t))\) defined in \eqref{eqn: covariance vech dynamics}, which couples matrix half-vectorization and nonlinear coupling between covariance matrix \(P(t)\) and the Jacobian of the measurement function defined in \eqref{eqn: Jacobians}.  

The following proposition provides an analytical formula to differentiate the function \(r(x(t), p(t))\) with respect to its function inputs.

\begin{proposition}\label{prop: Jacobian}
Let \(P(t)=\vech^{-1}(p(t))\), and let
\begin{equation*}
    \begin{aligned}
         K(x(t)) &=P(t) C(x(t))^\top R^{-1},\\
         A_{\text{cl}}(x(t)) &= A(t)-P(t) C(x(t))^\top R^{-1}C(x(t)),
    \end{aligned} 
\end{equation*}
where \(A(t)\) and \(C(x(t))\) are computed according to \eqref{eqn: Jacobians} and \eqref{eqn: mean dynamics} for all \(t\). Let matrices \(D\) and \(E\) be such that \(\vect (S) = D \vech(S),\enskip \vech(S)=E\vect(S)\)
for any \(S\in\mathbb{R}^{n_\xi\times n_\xi}\), where $\vect$ stacks all the entries of a matrix column-wise. In addition, let matrix \(T\) be such that \(T \vect (N)=\vect (N^\top)\) for any \(N \in \mathbb{R}^{n_y\times n_\xi}\). For the function \(r(x(t), p(t))\) defined by \eqref{eqn: covariance vech dynamics},
\begin{equation} \label{eqn: covariance Jacobians}
\begin{aligned}  
    & \frac{\partial r(x(t), p(t))}{\partial p(t)}= E(I_{n_\xi}\otimes A_{\text{cl}}(x(t))+ A_{\text{cl}}(x(t))\otimes I_{n_\xi})D,\\
    &\frac{\partial r(x(t), p(t))}{\partial \vect(C(x(t))}\\
    &= - E \big(P(t)\otimes K(x(t))+ (K(x(t))\otimes P(t)) T\big), 
\end{aligned}  
\end{equation}
where $\otimes$ denotes the Kronecker product. 
\end{proposition}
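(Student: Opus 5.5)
The plan is to compute both Jacobians through first-order differentials of the matrix expression inside $\vech$ in \eqref{eqn: covariance vech dynamics}, then vectorize. Two identities do the work. The first is $\vect(XYZ) = (Z^\top\otimes X)\vect(Y)$. The second is that $\vech$ is linear, so $\vech(\cdot)=E\vect(\cdot)$ may be applied to the differential directly. Throughout, $A(t)$ is treated as independent of $x(t)$ and $p(t)$: by \eqref{eqn: mean dynamics}, $\mu(t)$ depends only on $\mu_0$ and $t$. The constant $Q$ also drops out.

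For the derivative with respect to $p(t)$, I perturb $P$ by a symmetric $dP$ with $\vect(dP)=D\,dp$. Writing $M(x)=C(x)^\top R^{-1}C(x)$, which is symmetric, the differential of the matrix inside $\vech$ is
\[ A\,dP + dP\,A^\top - dP\,M P - P M\,dP = A_{\text{cl}}\,dP + dP\,A_{\text{cl}}^\top . \]
This holds because $PM=(A-A_{\text{cl}})$ and $MP=(A-A_{\text{cl}})^\top$, the latter using the symmetry of both $P$ and $M$. Vectorizing gives $(I\otimes A_{\text{cl}} + A_{\text{cl}}\otimes I)\vect(dP)$. Applying $E$ and substituting $\vect(dP)=D\,dp$ yields the first formula in \eqref{eqn: covariance Jacobians}.

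For the derivative with respect to $\vect(C)$, I perturb $C$ by $dC$ with $P$ fixed. The differential of $-PC^\top R^{-1}CP$ is
\[ -P\,dC^\top R^{-1}CP - PC^\top R^{-1}\,dC\,P = -P\,dC^\top K^\top - K\,dC\,P , \]
using $K=PC^\top R^{-1}$ and $K^\top=R^{-1}CP$, which relies on $R$ and $P$ being symmetric. Vectorizing the second term gives $(P^\top\otimes K)\vect(dC)=(P\otimes K)\vect(dC)$. Vectorizing the first gives $(K\otimes P)\vect(dC^\top)=(K\otimes P)T\vect(dC)$. Applying $-E$ then produces the second formula.

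The main obstacle is the bookkeeping of transposes and symmetry. Two points need care. First, rewriting $PM$ and $MP$ in terms of $A_{\text{cl}}$ must be done so that the result comes out as $A_{\text{cl}}\,dP+dP\,A_{\text{cl}}^\top$. Second, for the $p$-derivative, one must justify that the chain rule through $D$ is valid. This holds because $r$ is the composition of $\vech$, a matrix function, and $\vech^{-1}$, and $\vech^{-1}$ corresponds exactly to $D$ on vectorized symmetric matrices. Since the result only needs to hold for symmetric perturbations, no issue arises with the nonuniqueness of $E$.
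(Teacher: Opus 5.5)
Your proposal is correct and follows essentially the same route as the paper's appendix. Both take a first-order perturbation of the matrix expression inside $\vech$, vectorize with $\vect(XYZ)=(Z^\top\otimes X)\vect(Y)$, and use $E$, $D$, and $T$ to move between $\vech$, $\vect$, and transposes. The paper additionally bounds the quadratic remainders as $O(\norm{\delta p}_2^2)$ and $O(\norm{\delta c}_2^2)$, which your differential argument leaves implicit; this is harmless, since $r$ is quadratic in $p$ and in $\vect(C)$. Your regrouping into $A_{\text{cl}}\,dP+dP\,A_{\text{cl}}^\top$ before vectorizing is cleaner than the paper's term-by-term vectorization.
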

\begin{proof}
    See Appendix.
\end{proof}
Based on Proposition~\ref{prop: Jacobian}, we can differentiate the function \(r(x(t), p(t))\) with respect to \(x(t)\) and \(p(t)\) via the chain rule. 
\section{Numerical Simulations}

We use the augmented Lagrangian method to solve \eqref{eqn: optimization problem} using \eqref{eqn: covariance Jacobians} for two examples, LTE transmitter localization and ground vehicle tracking.
We then compare our method with a benchmark obtained by solving
\eqref{eqn: CT optimization problem} using \texttt{IPOPT}~\cite{wachter2006implementation} through \texttt{CasADi}~\cite{Andersson2019casadi}.

\subsection{Objective Function}

The running cost function we consider is given by
\begin{equation} \label{eq: runing_obj}
    \ell\left(x(t),p(t),u(t)\right)
    =
    \frac{\omega_p}{t_f \rho_p}\norm{\operatorname{vech}^{-1}(p(t))}_2
    +
    \frac{\omega_u}{t_f \rho_u}\norm{u(t)}_2^2,
\end{equation}
where explicit dependence on $t$ is dropped for clarity, and \(\norm{\cdot}_2\) denotes the Schatten 2-norm (for matrices). In addition,
where \(\omega_p\), \(\omega_u \in [0, 1]\) are normalized weights such that \(\omega_p + \omega_u = 1\), and \(\rho_p\) and \(\rho_u\) are empirically determined normalization terms based on the maximum values of each objective. We consider three weighting schemes: covariance only (\(\omega_p = 1\)), energy only (\(\omega_u = 1\)), and balanced (\(\omega_p = \omega_u = 0.5\)).

\subsection{Example 1: LTE Transmitter Localization}

For the first example, consider a ground vehicle with a receiver that collects pseudorange-rate measurements from a stationary LTE transmitter \cite{Braaten2026localizing}. The goal is to optimally control the ground vehicle (mobile sensor) to obtain the best position and clock drift estimates of the LTE transmitter (target system). We opt for a unicycle model to showcase nonlinear mobile sensor dynamics. 

\subsubsection{Mobile Sensor Dynamics}

The mobile sensor state is given by
\begin{equation*} 
    x(t) = \begin{bmatrix}
        r_\texttt{S}^\top(t) & v_\texttt{S}(t) & \theta_\texttt{S}(t) 
    \end{bmatrix}^\top,
\end{equation*}
where $r_\texttt{S}(t) \in \mathbb{R}^2$ is the position of the mobile sensor, $v_\texttt{S}(t) \in \mathbb{R}$ is the forward velocity, and $\theta_\texttt{S}(t) \in \mathbb{R}$, is the heading angle measured from the \(x\) axis. The control inputs are
\begin{equation*}
    u(t) = \begin{bmatrix}
        \dot{v}_\texttt{S}(t) & \dot{\theta}_\texttt{S}(t)
    \end{bmatrix}^\top, 
\end{equation*}
where $\dot{v}_\texttt{S}(t)$, $\dot{\theta}_\texttt{S}(t) \in \mathbb{R}$ are the target's forward acceleration and angular velocity with respect to a body-fixed frame, respectively. Its dynamics are governed by
\begin{equation*}
    \dot{x}(t) = \begin{bsmallmatrix}
        v_\texttt{S}(t)\operatorname{cos}(\theta_\texttt{S}(t)) \\
        v_\texttt{S}(t)\operatorname{sin}(\theta_\texttt{S}(t)) \\
        \dot{v}_\texttt{S}(t) \\
        \dot{\theta}_\texttt{S}(t) \\
    \end{bsmallmatrix}.
\end{equation*}

The state inequality constraints are given by
\begin{equation*}
    g(x(t)) = \begin{bsmallmatrix}
        r_\texttt{S}(t) - r_\texttt{S}^\texttt{max} \\ 
        r_\texttt{S}^\texttt{min} - r_\texttt{S}(t) \\
        v_\texttt{S}(t) - v_\texttt{S}^\texttt{max} \\
        v_\texttt{S}^\texttt{min} - v_\texttt{S}(t) 
    \end{bsmallmatrix},
\end{equation*}
where $r_\texttt{S}^\texttt{min}$, $r_\texttt{S}^\texttt{max} \in \mathbb{R}^2$ are the element-wise lower and upper bounds on position, and $v_\texttt{S}^\texttt{min}$, $v_\texttt{S}^\texttt{max} \in \mathbb{R}$ are the lower and upper bounds on velocity. The input constraint is given by
\begin{equation*}
    u^\texttt{min} \leq u_k \leq u^\texttt{max}, \enskip k = 1, 2, \ldots, N - 1,
\end{equation*}
where $u^\texttt{min}$, $u^\texttt{max} \in \mathbb{R}^2$ are the element-wise lower and upper bounds on inputs.

\subsubsection{Target System Dynamics}

The target state is given by
\begin{equation*}
    \xi(t)
    =
    \begin{bmatrix}
    r_\texttt{T}^\top(t) & 
    b(t)
    \end{bmatrix}^{\top},
\end{equation*}
where $r_\texttt{T}(t) \in \mathbb{R}^2$ is the target's \(x\) and \(y\) position, and \(b(t)=c(\dot{\delta}_\texttt{S}(t)-\dot{\delta}_\texttt{T}(t))\) is the measurement bias caused by clock-drift. Here, $c$ is the speed of light and $\dot{\delta}_\texttt{S}(t)-\dot{\delta}_\texttt{T}(t)$ is the difference in drift between the receiver and transmitter clocks.  The target dynamics are governed by
\begin{equation*}
    \dot{\xi}(t)
    = w(t).
\end{equation*}
The nominal target state remains constant, while the process noise accounts for uncertainty in the target-state evolution. Note that the process noise of the target position is near-zero.

\subsubsection{Measurement Model}

The mobile sensor obtains a scalar range-rate, or Doppler-like, measurement given by
\begin{equation*}
h(x(t),\xi(t)) =
\begin{bmatrix}
    v_\texttt{S}(t)\operatorname{cos}(\theta_\texttt{S}(t)) \\
    v_\texttt{S}(t)\operatorname{sin}(\theta_\texttt{S}(t))
\end{bmatrix}^\top \frac{r_\texttt{S}(t) - r_\texttt{T}(t)}{\norm{r_\texttt{S}(t) - r_\texttt{T}(t)}_2} + b(t).
\end{equation*}

\subsubsection{Parameters}

We consider 
\(t_f = 10\) and 
\(N = 5\), with initial conditions 
\(x_0 = \begin{bmatrix} 1 & 1 & 0 & 0 \end{bmatrix}^\top\), \(\xi_0 = \begin{bmatrix} 0 & 0 & 0.001 \end{bmatrix}^\top\), and 
\(P_0 = \operatorname{diag}(4, 4, 1\times10^{-4})\), and final condition 
\(x_f = \begin{bmatrix} 0 & 1 & 0 & 0 \end{bmatrix}^\top\). The constraint parameters are  
\(r_\texttt{S}^\texttt{min} = \begin{bmatrix} -0.25 & 0.75 \end{bmatrix}^\top\), 
\(r_\texttt{S}^\texttt{max} = \begin{bmatrix} 1.25 & 1.25 \end{bmatrix}^\top\), 
\(v_\texttt{S}^\texttt{min} = -0.5\),
\(v_\texttt{S}^\texttt{max} = 0.5\), 
\(u^\texttt{min} = \begin{bmatrix} -0.2 & -0.5 \end{bmatrix}^\top\), and 
\(u^\texttt{max} = \begin{bmatrix} 0.2 & 0.5 \end{bmatrix}^\top\). The noise covariances are  
\(Q = \operatorname{diag}(1\times10^{-6}, 1\times10^{-6}, 1.4193\times10^{-4})\), and 
\(R = 5\). The objective normalization terms are 
\(\rho_p = 5\) and 
\(\rho_p = 0.6\).

\subsection{Example 2: Ground Vehicle Tracking}

For the second example, consider an unmanned aerial vehicle (UAV) that collects ranging measurements to a ground vehicle. The goal is to optimally control the UAV (mobile sensor) to obtain the best estimate of the position, velocity, and heading of the ground vehicle (target system). We opt for a unicycle model to showcase nonlinear target dynamics.

\subsubsection{Mobile Sensor Dynamics}

The mobile sensor state is given by 
\begin{equation*}
    x(t) =
    \begin{bmatrix}
        r_\texttt{S}^\top(t) &
        \dot{r}_\texttt{S}^\top(t)
    \end{bmatrix}^{\top},
\end{equation*}
where \(r_\texttt{S}(t) \in \mathbb{R}^2\) is the mobile sensor's position, and \(\dot{r}_\texttt{S}(t) \in \mathbb{R}^2\) is the mobile sensor's velocity. The control inputs are
\begin{equation*}
    u(t) = \ddot{r}_\texttt{S}(t),
\end{equation*}
where \(\ddot{r}_\texttt{S}(t) \in \mathbb{R}^2\) are the mobile sensor's accelerations. Its dynamics are governed by
\begin{equation*}
    \dot{x}(t) = \begin{bsmallmatrix}
        0_{2} & I_{2} \\
        0_{2} & 0_{2}
    \end{bsmallmatrix}x(t) + \begin{bsmallmatrix}
        0_{2} \\ I_{2}
    \end{bsmallmatrix}u(t).
\end{equation*}

The state inequality constraints are given by
\begin{equation*}
    g(x(t)) = \begin{bmatrix}
        \norm{\dot{r}_\texttt{S}(t)}_2 - \dot{r}_\texttt{S}^\texttt{max} \\
        r_\texttt{ST}^\texttt{min} - \norm{r_\texttt{S}(t) - \tilde{r}_\texttt{T}(t)}_2
    \end{bmatrix},
\end{equation*}
where \(\dot{r}_\texttt{S}^\texttt{max} \in \mathbb{R}\) is the maximum velocity of mobile sensor, \(r_\texttt{ST}^\texttt{min} \in \mathbb{R}\) is the minimum allowable distance between the mobile sensor and target, and \(\tilde{r}_\texttt{T}(t) \in \mathbb{R}^2\) is the mean of the estimated target position given by \eqref{eqn: mean dynamics}. The input constraint is given by
\begin{equation*}
    \norm{u_k}_2 \leq u^\texttt{max}, \enskip k = 1, 2, \ldots, N - 1,
\end{equation*}
where $u^\texttt{max} \in \mathbb{R}$ is the input magnitude upper bound.

\subsubsection{Target System Dynamics}

The target state is given by
\begin{equation*} 
    \xi(t) = \begin{bmatrix}
        r_\texttt{T}^\top(t) & v_\texttt{T}(t) & \theta_\texttt{T}(t) & \dot{\theta}_\texttt{T}(t)
    \end{bmatrix}^\top,\\
\end{equation*}
where $r_\texttt{T}(t) \in \mathbb{R}^2$ is the target's position, and $v_\texttt{T}(t)$, $\theta(t)$, $\dot{\theta}(t) \in \mathbb{R}$ are the target's forward velocity, angle from the \(x\) axis, and angular velocity, respectively. Its dynamics are governed by
\begin{equation*}
    \dot{\xi}(t) = \begin{bsmallmatrix}
        v_\texttt{T}(t)\operatorname{cos}(\theta_\texttt{T}(t)) \\
        v_\texttt{T}(t)\operatorname{sin}(\theta_\texttt{T}(t)) \\
        0 \\
        \dot{\theta}_\texttt{T}(t) \\
        0
    \end{bsmallmatrix} + w(t).
\end{equation*}

\subsubsection{Measurement Model}

The mobile sensor obtains a scalar range measurement given by
\begin{equation*}
    h(x(t),\xi(t)) = \norm{r_\texttt{S}(t) - r_\texttt{T}(t)}_2.
\end{equation*}

\subsubsection{Parameters} 

We consider 
\(t_f = 10\) and 
\(N = 5\), with initial conditions 
\(x_0 = \begin{bmatrix} 0 & 0 & 0 & 0 \end{bmatrix}^\top\), 
\(\xi_0 = \begin{bmatrix} 5 & 5 & 1 & 0 & 0.3 \end{bmatrix}^\top\), and 
\(P_0 = \operatorname{diag}(10, 10, 0.5, 2, 0.1)\), and final condition 
\(x_f = x_0\). The constraint parameters are
\(\dot{r}^\texttt{max}_\texttt{S} = 4\), 
\(r_\texttt{ST}^\texttt{min} = 1\), and 
\(u^\texttt{max} = 2\). The noise covariances are  
\(Q = \operatorname{diag}(0.01, 0.01, 0.1, 0.1, 0.1)\), and 
\(R = 5\). The objective normalization terms are 
\(\rho_p = 10\) and 
\(\rho_u = 0.4\).

\subsection{Results}

\begin{figure}[!htp]\captionsetup[subfigure]{font=large}
    \centering
    \begin{tikzpicture}[baseline]
        \pgfplotslegendfromname{sharedlegendLTE}
    \end{tikzpicture}
    \resizebox{\linewidth}{!}{\input{Figures/norm_P_plots}}
    \caption{The trajectories of the error-covariance matrix Schatten 2-norm \(\norm{P(t)}_2\) for LTE transmitter localization (example 1, on the left), and the ground vehicle tracking (example 2, on the right), both solved using the augmented Lagrangian method. We consider three weighting schemes: covariance only (\(\omega_p = 1\)), energy only (\(\omega_u = 1\)), and balanced (\(\omega_p = \omega_u = 0.5\)).
    }
    \label{fig:covariance objective}
\end{figure}

\begin{figure}[!htbp]\captionsetup[subfigure]{font=large}
    \centering
    \resizebox{\linewidth}{!}{\input{Figures/pareto_plots}}
    \caption{Pareto fronts showing the trade-offs between the energy and covariance objectives when sweeping across values of the objective weights \(\omega_p\) and \(\omega_u\), for LTE transmitter localization (example 1, on the left), and ground vehicle tracking (example 2, on the right). Both are solved using the Augmented Lagrangian method.
    }
    \label{fig:pareto front}
\end{figure}

\begin{figure}[!htbp]\captionsetup[subfigure]{font=large}
    \centering
    \begin{tikzpicture}[baseline]
        \pgfplotslegendfromname{sharedlegendconstraints}
    \end{tikzpicture}
    \resizebox{\linewidth}{!}{\input{Figures/constraint_violation_plots}}
    \caption{The $x$ component of \(r_\texttt{s}(t)\) for the LTE transmitter localization example (example 1, on the left), and the range \(\norm{r_\texttt{S}(t) - \tilde{r}_\texttt{T}(t)}_2\) for the ground vehicle tracking example (example 2, on the right). Their respective bounds are indicated by dashed lines. The blue lines indicate our method using the augmented Lagrangian, while the orange lines indicate the benchmark \texttt{CasADi} method. Both examples use the covariance only weighting scheme (\(\omega_p = 1)\).
    }
    \label{fig:constraints}
\end{figure}


Figure~\ref{fig:covariance objective} shows the evolution of \(\norm{P(t)}_2\) for both examples under the three weighting schemes. As the covariance weight \(\omega_p\) increases, the area under the covariance trajectory decreases, indicating improved estimation performance relative to the minimum-energy solution (\(\omega_p\) = 0). 

The Pareto fronts in Fig.~\ref{fig:pareto front} illustrate how more informative measurements require the mobile sensor to perform more aggressive maneuvers. These maneuvers drive the mobile sensor closer to constraint boundaries. 
We further compare the performance of our method with that of the benchmark method in Fig.~\ref{fig:constraints}. The benchmark is obtained by solving a discretized version of \eqref{eqn: CT optimization problem} using \texttt{IPOPT}, a primal-dual interior-point solver~\cite{wachter2006implementation}, through \texttt{CasADi}~\cite{Andersson2019casadi}. As Fig.~\ref{fig:constraints} shows, it exhibits constraint violation between discretization points. While increasing the number of discretization points mitigates this issue, it also increases size and computational cost of the optimization problem. In contrast, our method satisfies the constraints in continuous-time (within numerical precision), even with relatively few discretization points. 

\section{Conclusion}

We presented a method of estimation-aware optimal control that jointly considers the dynamics of the mobile sensor and target covariance, while enforcing constraints in continuous time. We then differentiated the covariance dynamics and exploited the symmetry of the covariance matrix to support efficient optimization. Through two examples, we demonstrated that our method satisfies continuous-time constraints in contrast to benchmark methods.



\section*{Appendix}
\emph{Proof of Proposition~\ref{prop: Jacobian}: }To simplify our notation, we omit any explicit dependence on time \(t\). First, we start with proving the first equation in \eqref{eqn: Jacobians}. Let \(\delta p\) denote a perturbation of vector \(p\). Let \(\delta P=\vech^{-1}(\delta p)\). We can show that \[\vech^{-1}(p+\delta p)=\vech^{-1}(p)+\vech^{-1}(\delta p)=P+\delta P. \]
Hence
\begin{equation}\label{eqn: delta r}
\begin{aligned}
    &r(x, p+\delta p)-r(x, p)\\
    & = -\vech (\delta PC(x)^\top R^{-1}C(x) P+PC(x)^\top R^{-1}C(x)\delta P)\\
    &\quad+\vech(A\delta P+\delta P A)-\vech(\delta P C(x)^\top R^{-1}C(x)\delta P).
\end{aligned}   
\end{equation}
By using the definition of \(\vech\) function, we can show that
\begin{equation}\label{eqn: norm bd}
\begin{aligned}   &\norm{\vech(\delta P C(x)^\top R^{-1}C(x)\delta P)}_2\\
&\leq \norm{\delta P C(x)^\top R^{-1}C(x)\delta P}_F\leq \norm{C(x)}_F^2\norm{R^{-1}}_F\norm{\delta P}_F^2\\
&\leq 2 \norm{C(x)}_F^2\norm{R^{-1}}_F\norm{\delta p}_2^2,
\end{aligned} 
\end{equation}
where the second inequality is due to \cite[Cor. 11.3.7]{bernstein2018scalar}, and the last step holds because each element in \(\delta p\) appears at most twice in the symmetric matrix \(\delta P=\vech^{-1}(\delta p)\). Hence
\begin{equation}\label{eqn: big O del p}
    \vech(\delta P C(x)^\top R^{-1}C(x)\delta P)=O(\norm{\delta p}_2^2),\enskip \text{as } \norm{\delta p}_2\to 0.
\end{equation}

Furthermore, by using \cite[Cor. 9.1.9]{bernstein2018scalar} we can show that
\begin{equation}\label{eqn: vect product}
\begin{aligned}
    &\vect(\delta PC(x)^\top R^{-1}C(x) P+PC(x)^\top R^{-1}C(x)\delta P)\\
    &= (I_{n_\text{T}}\otimes K(x)+K(x)\otimes I_{n_\text{T}})\vect(\delta P),\\
    & \vect(A\delta P+\delta P A)= (I_{n_\text{T}}\otimes A+A\otimes I_{n_\text{T}})\vect(\delta P). 
\end{aligned}  
\end{equation}
where \(K(x)= PC(x)^\top R^{-1}\). By combining the results in \eqref{eqn: vect product}, \eqref{eqn: big O del p}, \eqref{eqn: delta r}, and using the definition of matrix \(E\) and \(D\), we can show that
\begin{equation*}
\begin{aligned}
    &r(x, p+\delta p) - r(x, p)=\\
    &E(I_{n_\text{T}}\otimes A_{\text{cl}}+A_{\text{cl}}\otimes I_{n_\text{T}}) D\vect(\delta P)+ O(\norm{\delta p}_2^2),
\end{aligned}
\end{equation*}
as \(\norm{\delta p}_2\to 0\), where \(A_{\text{cl}} = A-K(x)C\). Therefore, the first equation in \eqref{eqn: Jacobians} holds. 

Next, we prove the second equation in \eqref{eqn: Jacobians}. We first treat \(c=\vect(C(x))\) as an independent variable and compute the Jacobian with respect to \(c\). The desired results then follows by evaluating this Jacobian at \(c=\vect(C(x))\). To simplify the notation, we consider the following function \(c\): \(
q(c)=\vech(AP+PA^\top +Q-PC^\top R^{-1}CP)\), where \(\delta C=\vect^{-1}(c)\). Let \(\delta c\) denote a perturbation of vector \(c\). We can show that
\(\vect^{-1}(c+\delta c)=\vect^{-1}(c)+\vect^{-1}(\delta c)=C+\delta C\).
Hence
\begin{equation}\label{eqn: delta q}
\begin{aligned}
    &q(c+\delta c)-q(c)=-\vech( P (\delta C)^\top R^{-1}\delta C P)\\
    & -\vech (PC^\top R^{-1}\delta C P+P(\delta C)^\top R^{-1}C P).
\end{aligned}   
\end{equation}
By using the definition of the \(\vech\) function, we can show that 
\begin{equation}\label{eqn: norm bd 2}
\begin{aligned}
    &\norm{\vech( P (\delta C)^\top R^{-1}\delta C P)}_2\leq \norm{\delta P (\delta C)^\top R^{-1}\delta C P}_F\\
    &\leq \norm{P}_F^2\norm{R^{-1}}\norm{\delta C}_F^2= \norm{P}_F^2\norm{R^{-1}}\norm{\delta c}_2^2,
\end{aligned}
\end{equation}
where the second inequality is due to \cite[Cor. 11.3.7]{bernstein2018scalar}, and the last step holds because vectorization gives a one-to-one correspondence between the elements of \(\delta C\) and \(\delta c\). Hence
\begin{equation}\label{eqn: big O del c}
    \vech( P (\delta C)^\top R^{-1}\delta C P) = O(\norm{\delta c}_2^2), \enskip \text{as } \norm{\delta c}_2\to 0.
\end{equation}

Furthermore, by using \cite[Cor. 9.1.9]{bernstein2018scalar} and the definition of matrix \(T\), we can show that
\begin{equation}\label{eqn: vec product 2}
\begin{aligned}
    & \vect (PC^\top R^{-1}\delta C P+P(\delta C)^\top R^{-1}C P)\\
    &= (P \otimes PC^\top R^{-1} + (PC^\top R^{-1}\otimes P)T)\delta c,
\end{aligned}
\end{equation}
where we also use the fact that \(\vect(\delta C)=\vect(\vect^{-1} (\delta c))=\delta c\). By combining the results in \eqref{eqn: vec product 2}, \eqref{eqn: big O del c}, \eqref{eqn: delta q}, and using the definition of matrix \(E\), we can show that 
\begin{equation*}
    \begin{aligned}
        &q(c+\delta c)-q(c)\\
        & = -(P \otimes PC^\top R^{-1} + (PC^\top R^{-1}\otimes P)T)\delta c + O(\norm{\delta c}_2^2)
    \end{aligned}
\end{equation*}
as \(\norm{\delta c}_2\to 0\). The second equation in \eqref{eqn: Jacobians} follows by letting \(C=C(x)\) and \(c=\vect(C(x))\) in the equation above.


\bibliographystyle{IEEEtran}
\bibliography{IEEEabrv,reference}

\end{document}